\newcommand{\cryptograph}{Crypto'Graph}
\title{\cryptograph: Leveraging Privacy-Preserving Distributed Link Prediction for Robust Graph Learning } 
\author{Sofiane Azogagh, \href{mailto:birba.zelma_aubin@courrier.uqam.ca}{Zelma Aubin Birba}, Sébastien Gambs and Marc-Olivier Killijian}
\date{}
\newcommand{\V}{\mathcal{V}}
\newcommand{\E}{\mathcal{E}}
\newcommand{\X}{\mathcal{X}}
\newcommand{\Y}{\mathcal{Y}}
\newcommand{\A}{\mathcal{A}}
\newcommand{\Z}{\mathbb{Z}}
\newcommand{\G}{\mathbb{G}}
\renewcommand{\P}{\mathbb{P}}
\newcommand{\CN}{\operatorname{CN}}
\newcommand{\J}{\operatorname{J}}
\newcommand{\Cosine}{\operatorname{Cosine}}
\newcommand{\relu}{\operatorname{ReLu}}
\newcommand{\softmax}{\operatorname{softmax}}
\newtcolorbox{alachemfig}[2][]{
    enhanced,
    size=fbox,sharp corners,
    colback=white,colframe=black,
    colbacktitle=black,fonttitle=\bfseries,
    attach boxed title to top center={yshift=-3mm,yshifttext=-3mm},
    boxed title style={size=small,left=0pt,right=0pt,sharp corners},
    title=#2,
    width=#1
    }
\begin{document}
\newtheorem{theorem}{Theorem}
\newtheorem{definition}{Definition}
\newtheorem{proof}{Proof}

\maketitle

\def\thefootnote{}
\footnotetext{This work is supported by the DEEL Project CRDPJ 537462-18 funded by the National Science and Engineering Research Council of Canada (NSERC) and the Consortium for Research and Innovation in Aerospace in Québec (CRIAQ), together with its industrial partners Thales Canada inc, Bell Textron Canada Limited, CAE inc and Bombardier inc. \url{https://deel.quebec}}

\def\thefootnote{\arabic{footnote}}

\begin{abstract}
Graphs are a widely used data structure for collecting and analyzing relational data. 
However, when the graph structure is distributed across several parties, its analysis is particularly challenging. 
In particular, due to the sensitivity of the data each party might want to keep their partial knowledge of the graph private, while still willing to collaborate with the other parties for tasks of mutual benefit, such as data curation or the removal of poisoned data.
To address this challenge, we propose \cryptograph{}, an efficient protocol for privacy-preserving link prediction on distributed graphs. 
More precisely, it allows parties partially sharing a graph with distributed links to infer the likelihood of formation of new links in the future. 
Through the use of cryptographic primitives, \cryptograph{} is able to compute the likelihood of these new links on the joint network without revealing the structure of the private individual graph of each party, even though they know the number of nodes they have, since they share the same graph but not the same links.
\cryptograph{} improves on previous works by enabling the computation of a certain number of similarity metrics without any additional cost.
The use of \cryptograph{} is illustrated for defense against graph poisoning attacks, in which it is possible to identify potential adversarial links without compromising the privacy of the graphs of individual parties. 
The effectiveness of \cryptograph{} in mitigating graph poisoning attacks and achieving high prediction accuracy on a graph neural network node classification task is demonstrated through extensive experimentation on a real-world dataset.
\end{abstract}

\section{Introduction}
\label{sec:intro}

In today's digital age, graphs have emerged as the predominant format for representing relational data, as they naturally capture both the relationships and structures inherent in such datasets. 
Indeed, from social networks~\cite{Wilson09} to biological systems~\cite{pavlopoulos2011using}, the interconnection of entities can be easily visualized and understood through graphs.
However, as data becomes increasingly distributed, a new set of challenges arises with respect to their analysis. 
For example, in a scenario where a graph's structure is distributed across multiple parties, the goal might be to study this structure without any party disclosing the private details of their segment. Such an analysis could involve predicting potential future links \cite{Liben03, Zheng15, Zhang18, Demirag23} or identifying malicious links that an adversary has introduced to compromise the graph's integrity \cite{Zugner18, Wu19, Xu23}. As such attacks might happen without been noticed, it is crucial to act preventively and allow for collaboration to defend against them.

To address these issues, we propose \cryptograph{}, a novel protocol designed for privacy-preserving link prediction on distributed graphs. 
To avoid privacy leakage, \cryptograph{} leverages cryptographic primitives such as Diffie-Hellman shared secrets and Private Set Intersection Cardinality (PSI-CA), which ensure that likelihood similarities used for link prediction can be computed on the joint network without exposing the specifics of the private individual graphs.
Furthermore, \cryptograph{} can be used as a robust defense against graph poisoning attacks.
More precisely, by predicting potential links without jeopardizing the confidential information of individual nodes, it can be used to effectively detect adversarial links, improving the quality of downstream graph learning tasks. 

The main contributions of this paper are:
\begin{itemize}
    \item We propose \cryptograph{}, a new protocol for distributed privacy-preserving link prediction on graph data via the computation of the common neighbors heuristic.
    \cryptograph{} is more efficient, by several orders of magnitude, than state-of-the-art methods while making it possible to derive other metrics such as the Jaccard and Cosine similarity measures at no extra cost, thus allowing to choose among different heuristics to adapt the link prediction to the actual data. 
    Finally, unlike previous works, it can be used on the complete graph without compromising the privacy of the private individual graphs as we demonstrate by proving its security against  
    graph reconstruction attacks.
    \item The applicability of \cryptograph{} is illustrated through a collaborative defense against graph poisoning scenario. 
    More precisely, we show how to leverage our protocol to privately derive link likelihood information in a distributed manner, enabling the different parties to identify adversarial links and remove them for a better utility on subsequent tasks, namely the training of graph neural networks. 
    In addition, we show that the benefit of collaborating via our protocol varies according to the common knowledge between the participants, the amount of adversarial links introduced as well as the type of attack conducted.
    Nonetheless, experiments on a real dataset demonstrate that it is almost always beneficial to cooperate even when the data of one party has not been poisoned. This encourages the use of our solution without prior certainty that one or both of the private graphs have undergone an attack.
\end{itemize}

The outline of the paper is as follows.
First, in Section~\ref{sec:soa}, we review the related work on link prediction both in the centralized and distributed settings before introducing in Section~\ref{sec:prelim} the background notions on link prediction, graph neural network and private set intersection, that are necessary to the understanding of our work. 
Afterwards, in Section~\ref{protocol} we describe \cryptograph{}, our protocol for secure and distributed link prediction before detailing how \cryptograph{} can be used to defend against graph poisoning in Section~\ref{sec:application-gnn}, in which we evaluate it on a real-world graph dataset.

\section{Related work}
\label{sec:soa}

Based on the structure of the graph and potential additional information (such as the values of node attributes), link prediction algorithms \cite{Liben03, Zheng15, Zhang18, Demirag23} aim at identifying future probable links in dynamic networks.
One of the first proposed methods for predicting a link between two nodes consists in measuring the similarity of these nodes by leveraging their neighborhood structure and, based on the assumption that nodes that have common neighbors tend to create connections, predict new links or not. 
This similarity can be computed on the structural information between nodes but also by considering their attributes. 
For instance, in research collaboration networks, Newman has shown that, in domains such as physics, the more coauthors two researchers have in common (\emph{i.e.}, the more common neighbors they have), the more they are likely to collaborate in the future~\cite{Newman01}. 
In addition, he has also observed that a scientist taken at random is more likely to make new collaborations if he has many past ones, introducing the notion of preferential attachment. 
Other similarity measures such as the Jaccard similarity~\cite{Jaccard01}, the Cosine similarity~\cite{Singhal01} or the Adamic-Adar index~\cite{Adamic03} have been explored as well.

Machine learning-based models have also been proposed to tackle the link prediction problem. 
For instance, Kashima and Abe have extracted topological features from the graph and used them to train a model for supervised link prediction~\cite{Kashima06}. 
Zhang and Chen have designed \textit{SEAL}, a neural network-based architecture for capturing the link formation law on a graph~\cite{Zhang18}. 
In particular, they have proven that low order subgraphs, composed only of few hops-neighbors, are often sufficient to estimate high order metrics that reason on the whole graph. 
They have also proposed a framework for predicting new links using network-based heuristics applied on these subgraphs.
Other link prediction methods also try to learn informative features for nodes while avoiding to explore the whole graph. 
For example, Perozzi and collaborators have used random walks in conjunction with the \textit{SkipGram} model~\cite{Mikolov13} to build representations of nodes based on samples of their neighborhoods~\cite{Perozzi14}. 
The Node2vec embedding technique~\cite{Grover16} also relies on an analog flexible approach for neighborhood sampling in the graph to generate continuous node feature representations that can then be compared between nodes to predict new links. 

However, the aforementioned methods are focused on the centralised setting, in which there is a single graph in the hands of only one party. 
Nonetheless, subsequent works have proposed approaches inspired by the previous ones but adapted to the multi-graph or distributed graph setting. We define multi-graph link prediction as the setting where the parties own graphs with potentially different nodes and links. The distributed graph setting, on the other hand, assumes that the parties hold the same nodes in their graphs, but not necessarily the same links.
For instance, some works have proposed to allow members of a decentralised social network like Mastodon \footnote{\href{https://joinmastodon.org}{https://joinmastodon.org}}, to define privacy controls on their connections by specifying which of their friendships they are willing to disclose~\cite{Zheng15}. 
The service provider can then use this information to train a logistic regression model that privately makes friendship recommendations. 
Another recent work has considered the distributed graph setting and allows multiple subgraph owners to make link predictions by computing similarity metrics in a secure manner~\cite{Zhang23}. 
More precisely by using secret sharing techniques, it is possible to privately aggregate the local similarity scores and allow the parties to make their decision based on the private aggregate. 
However, their approach can induce an accuracy loss in the prediction because it does not take into account the cross-party similarities (similarity between a node $x$ in one subgraph, and a node $y$ in  another subgraph). 
Another recent approach~\cite{Demirag23} computes the common neighbors similarity measure using three instances of a Private Set Intersection (PSI) protocol , which we will detail later in Section~\ref{preliminaries:psi}. 
However, it is not clear how this protocol can be adapted for the computation of other similarity measures and its computational cost is higher than the approach we proposed.

In the remainder of this paper, we introduce a method that addresses the drawbacks of the two former works. 
More precisely, our protocol improves on the accuracy compared to~\cite{Zhang23} while allowing to compute a wide range of similarity measures and thus not being limited to common neighbors as in~\cite{Demirag23}. 
In addition, as our protocol is highly efficient this makes it suitable for large-scale link prediction on a large graph, which motivates its usage as a defense mechanism against graph poisoning, as described later in Section~\ref{sec:application-gnn}.

\section{Preliminaries}
\label{sec:prelim}

In this section, we start by providing the notations we will be using, then we introduce the preliminary notions of link prediction, graph neural network and private set intersection, that are necessary to the understanding of the remainder of this paper.
\begin{comment}

\end{comment}

To begin, the notations employed throughout the paper are presented in Table~\ref{table-notation}.

\begin{table}[h!]
\begin{tabular}{|l|l|l|}
\hline
\multirow{3}{*}{Graph}  & $\V$ & set of vertices (\emph{i.e} nodes) \\ \cline{2-3} 
                        & $\E$ & set of edges (\emph{i.e} links)  \\ \cline{2-3} 
                        & $G=(\V,\E)$ & graph of nodes $\V$ and links $\E$   \\ \cline{2-3}
                        & $\Gamma(x) \subseteq\V$ & the neighbors of $x$ in $G$ (\emph{i.e} the nodes in $\V$ that share a link with $x$) \\ \hline
\multirow{2}{*}{GNN}    & $\X \in \mathbb{R}^{d\times |\V|}$      & feature matrix  where $\X[i]$ is $d-$dimensional feature vector of node $i$ \\ \cline{2-3} 
                        & $\Y \in \mathbb{R}^{|\V|}$       & labels of nodes where $\Y[i]$ is the label of node $i$ \\ \cline{2-3}
                        & $G = (\V,\E,\X,\Y$) & extended graph (with features and labels)  \\ \cline{2-3}
                        & $\A$ & adjacency matrix of a graph \\ \cline{2-3}
                        & $\tilde{A}$ & adjacency matrix with self-connections inserted (\emph{i.e} $\tilde{A} = A + I_{\V})$ \\ \cline{2-3}
                        & $\tilde{D}$ & degree matrix (\emph{i.e} $\tilde{D_{ii}} = \sum_{j} \tilde{A}_{ij}$) at a specific layer \\ \hline
\multirow{4}{*}{Crypto} & $p,q$       & large prime $p$ and integer $q$ such that $q$ divides $p-1$\\ \cline{2-3} 
                        & $\mathbb{Z}_p$      & set of integers modulo p \\ \cline{2-3} 
                        & $\G_q$      & multiplicative group of order $q$ \\ \cline{2-3} 
                        & $g$       & generator of $\G_q$ \\ \hline

\end{tabular}

\caption{A guide to the notation used in this paper.}
\label{table-notation}
\end{table}

\subsection{Link prediction}
\label{sec:linkprediction}

Link prediction~\cite{Newman01,Liben03} is a graph learning task that aims to infer the potential existence of links between nodes that are not currently connected. 
Link prediction has many possible applications, such as friend recommendation in a social network~\cite{Liben03} or in healthcare for the the study of contacts for epidemic control purposes~\cite{antweiler2022uncovering}.
A lot of link prediction methods are based on the computation of the similarity between the neighborhoods of pairs of nodes. 
More formally, considering two nodes $x,y \in \V$, we can compute their similarity in the following ways (which are also the most commonly used in the literature):
\begin{itemize}
    \item \textbf{Common Neighbors:} The common neighbors similarity of $x$ and $y$ is defined as $$\CN(x,y) = | \Gamma(x) \cap \Gamma(y)|$$.
    \item \textbf{Jaccard:} The Jaccard similarity between $x$ and $y$ is defined as $$\J(x,y) =\frac{|\Gamma(x) \cap \Gamma(y)|}{|\Gamma(x) \cup \Gamma(y)|}= \frac{\CN(x,y)}{|\Gamma(x) \cup \Gamma(y)|}$$.
    \item \textbf{Cosine:} The cosine similarity between $x$ and $y$ is given by $$\Cosine(x,y) = \frac{|\Gamma(x) \cap \Gamma(y)|}{\sqrt{ |\Gamma(x)| } \times \sqrt{|\Gamma(y)|}} = \frac{\CN(x,y)}{\sqrt{ |\Gamma(x)| } \times \sqrt{|\Gamma(y)|}}$$
\end{itemize}

In real-world scenarios, the graph might be distributed among different parties. 
To account for this, and without loss of generality, we consider that the whole graph $G=(\V,\E)$ is distributed among two parties $P_1$ and $P_2$ such as each party entirely knows $\V$ but only a fraction of $\E$. It's worth noting that this distribution can be generalized, as shown in~\cite{Zhang23}.
Let $G_1(\V,\E_1)$ denote the graph of $P_1$ and $G_2(\V,\E_2)$ the graph of $P_2$ such as $\E_1 \subset \E$ and $\E_2 \subset \E$.
We consider the scenario in which $P_1$ and $P_2$ want to collaborate to predict the existence of a link in their respective graphs without revealing them due to confidentiality issues that may arise (\emph{e.g.}, the graphs may represent personal relationships). 
A possible solution to predict a link between $x,y \in \V$ is for $P_1$ to privately share $\Gamma_1(x) $ and $ \Gamma_1(y)$ with $P_2$, who in turn privately shares $\Gamma_2(x)$ and $ \Gamma_2(y)$. 

Subsequently, the link prediction primitive computes the similarity measure on the joined graph $G$ and outputs this measure to both $P_1$ and $P_2$. 
Finally, each party decides to predict a link based on this result and a threshold chosen independently as illustrated in Figure \ref{fig:linkprediction}.

\begin{figure}[h!]
    \centering
    \includegraphics[width=\linewidth]{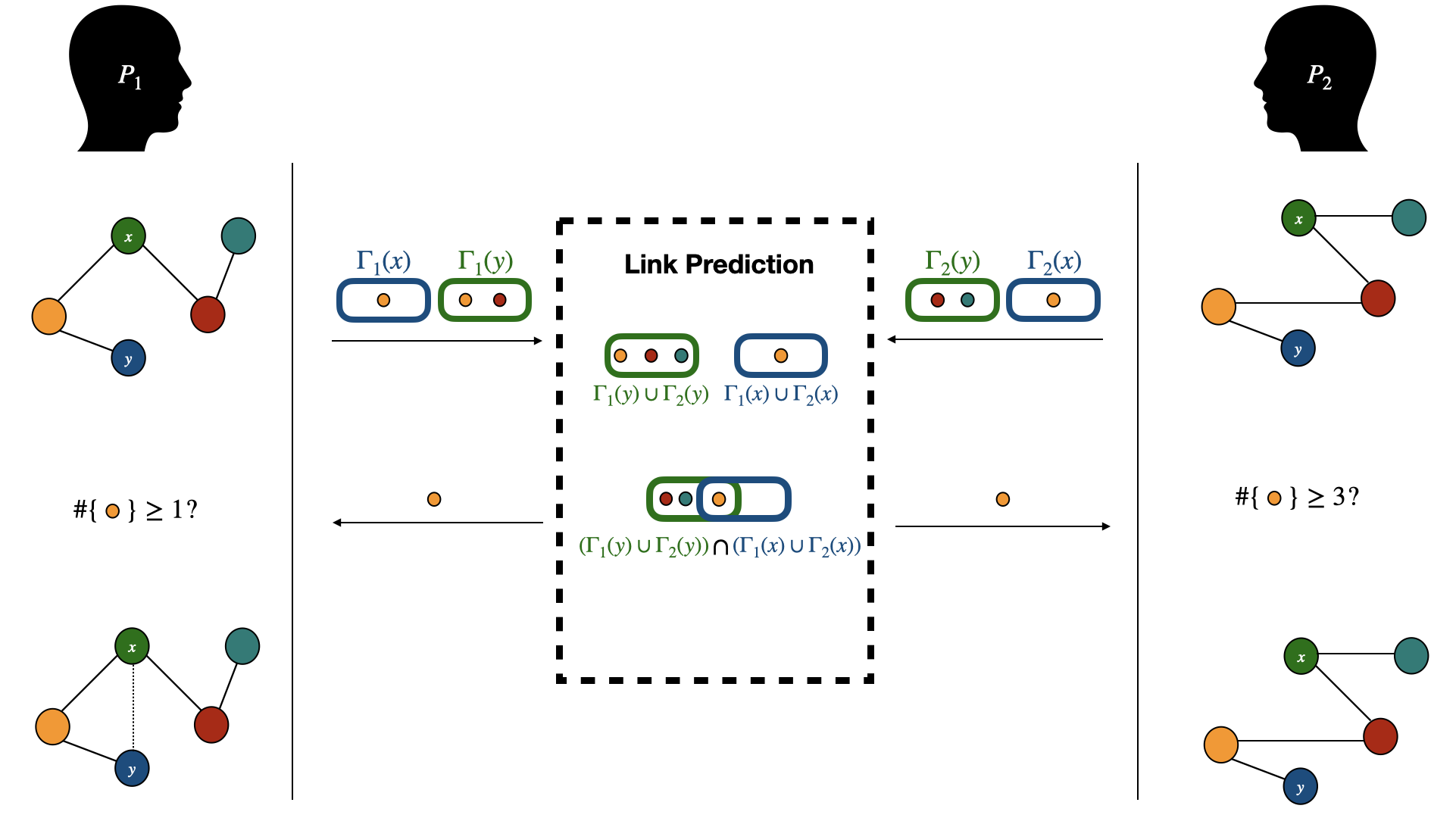}
    \caption{An illustration of a collaboration between $P_1$ and $P_2$ to predict a link between the node $x$ (green) and $y$ (blue). 
    Here, the link prediction primitive will output $\CN(x,y)$ that each party $P_i$ will compare with their personal threshold, which is $1$ for $P_1$ and $3$ for $P_2$. 
    Based on the result of this comparison, a link between $x$ and $y$ can be added or not, in one or both graphs $G_1$ and $G_2$.}
    \label{fig:linkprediction}
\end{figure}

\subsection{Graph Neural Networks}

To take advantage of their structural information, deep learning approaches have been adapted to the graph data format with the advent of Graph Neural Networks (GNNs)~\cite{Scarselli09, Kipf17}, which have made it possible to harness the structure and the node properties in graphs for various learning tasks. 
In a nutshell, most GNN models take as input a graph  $G=(\V,\E,\X, \Y)$ and combine the features of each node with the ones of its neighbors to perform the predictions. 
In this way, new node features are created with each node aggregating information from farther in the graph at each layer.
Nonetheless, different types of models have been introduced over the years.  
For instance, representation learning algorithms~\cite{Perozzi14, Grover16} are deep learning-based architectures that represent nodes, edges or subgraphs of a graph as a multidimensional vector. 
This vector can then be used for various statistical and prediction tasks, such as community identification (\emph{i.e.}, a form of clustering) and link prediction. 

In this work, we focus on Graph Convolutional Networks (GCNs)~\cite{Kipf17}, a specific type of network that mimics convolutions on images. 
More precisely, we consider a semi-supervised node classification task that aims at predicting the labels of test nodes based on the features, edges and labels of a set of training nodes. 
Following the approach taken in~\cite{Kipf17, Zugner18}, we consider a two layer neural network given by the formula : 
 $$ Z = \softmax\left(\hat{A} \ \relu \left( \hat{A}XW^{(0)}\right) W^{(1)}\right),  $$
in which $\hat{A} = \tilde{D}^{-\frac{1}{2}} \tilde{A} \tilde{D}^{-\frac{1}{2}}$, and $W^{(0)}, W^{(1)}$ are the trainable parameters of the network and $Z$ is the prediction of the network for a given node.

Similarly to classical deep learning models, GNNs have been shown to be vulnerable to adversarial attacks. 
More specifically for the task considered here, this means that an adversary can poison the graph by injecting links or altering node features to influence the prediction made by the model~\cite{Zugner18, Wu19}. 
Some of the classical defenses against such poisoning attacks leverage link prediction techniques~\cite{Wu19, Xu23}, in which by computing a likelihood score for each edge, one can identify the unlikely links and consider them as being potentially adversarial (and thus cleaning them).

\subsection{Private Set Intersection}
\label{preliminaries:psi}

A Private Set Intersection (PSI) protocol~\cite{Freedman04} allows two or more parties, each holding a private set of items to compute the intersection of their sets (\emph{i.e.}, items that they have in common) without revealing any additional information.
PSI has found many applications in the real world, such as private data mining~\cite{Nomura20}, the analysis of genomics or medical data~\cite{Aziz17,miyaji2017privacy,Shen18,Ruan19} or even botnet detection~\cite{Nagaraja10}. 
Since it was first introduced, PSI has evolved in many subvariants of protocols according to the scenario in which it is applied.
For example, if only one of the parties needs to obtain the intersection at the end of the computation, it is called a \textit{one-way} PSI, while otherwise it is a \textit{mutual} PSI. 
Another type of PSI protocol was developed according to the size the sets of each party.
For instance, if the sizes of the private sets are similar, it is refer to as a \textit{balanced} PSI while otherwise it is called an \textit{unbalanced} one. 
More generally, a classification has recently been proposed in~\cite{MORALES2023}. 

In other scenarios, the parties may want to know \textit{how much} they share but not what exactly they have in common. 
This can be solved using a \textit{PSI-CA} (standing for Private Set Intersection CArdinality), which cannot be directly instantiated from classical PSI. 
Among the cryptographic building blocks that can be combined to develop a PSI-CA protocol, we can cite for instance homomorphic encryption, as seen in works like~\cite{Debnath17, Liu18, Ion21}, oblivious transfer \cite{Dong17}, using generic public key techniques described in~\cite{Debnath20, ResendeA18}, or even commutative encryption similar to the method outlined in \cite{Siyi20}. 
In this latter approach, one party, $P_1$, initiates the protocol by sending its own set of items $X$ encrypted as $Enc_{PK_1}(X)$ to $P_2$. 
Afterwards $P_2$ shuffles the elements and send them back to $P_1$ as $Enc_{PK_2}(Enc_{PK_1})(X)$. 
Additionally, $P_2$ sends its own set encrypted as $Enc_{PK_2}(Y)$. 
By using a commutative encryption scheme, $P_1$ can ``delete'' its corresponding key $PK_1$ from $Enc_{PK_2}(Enc_{PK_1}(X))$ to obtain $Enc_{PK_2}(X)$ and then compare the number of correspondence with $Enc_{PK_2}(Y)$. This approach inspired a lot of PSI-CA including the one that we use in this paper which is based on~\cite{decristofaro12}.

\section{Protocol}
\label{protocol}
In this section, we present our protocol \cryptograph{}, which enables to perform link prediction on graph data between two parties in a distributed and privacy-preserving manner. 
However, our protocol can easily be generalized to more than two parties by following the approach proposed in~\cite{Zhang23}.

\subsection{Description}
\cryptograph{} is close in spirit to the PSI functionality, in the sense that it privately computes the common neighbors of two nodes on a distributed graph. 
Let $G_1=(\V,\E_1)$ and $G_2=(\V,\E_2)$ be the private subgraphs owned by the semi-honest parties $P_1$ and $P_2$. 
The main idea behind our approach is the following : by representing the union graph $G_1 \cup G_2$ as a data structure that masks the neighborhood of each node while keeping its cardinality, it is possible to count the common neighbors of any two nodes. 
To achieve this objective, we propose a solution based on the use of Diffie-Hellman shared secrets~\cite{Diffie76}. Our solution can be divide in two phases: an offline phase that can be precomputed before the beginning of the concrete protocol, and the real execution of the protocol on the precomputed data.

More precisely, for a link prediction performed between nodes $x$ and $y$, each party represents the neighborhoods of the nodes as sets $\{g^{\alpha \beta x_i}\}_{i \in \{1,\ldots,|\Gamma_k(x)|\} }$ and $\{g^{\alpha \beta y_i}\}_{i \in \{1,\ldots,|\Gamma_k(y)|\} }$, with $k \in \{1, 2\}$ and in which $\alpha$ and $\beta$ are secrets randomly sampled respectively by $P_1$ and $P_2$. 
Those secrets have the property that they can be used to compare the sets without disclosing the individual elements, which we leverage to compute the oblivious union of the neighbors of $x$ in both graphs, as well as for node $y$. 
Afterwards, we count the common elements of those sets to determine the size of the intersection.

One of the strength of our method is that by computing the intermediary sets of neighbors of $x$ and $y$ separately, those results can also be reused to compute the Jaccard similarity by dividing the common neighbors score with the size of the union of the neighbors of $x$ and $y$.
More generally, our method allows for the computation of any similarity metric involving the sizes of the immediate neighborhood of two nodes~\cite{Newman01, Jaccard01}. 

As an additional contribution, we propose an optimization of the aforementioned algorithm based on a caching mechanism.
More precisely, by keeping the same $\alpha$ and $\beta$ keys across predictions, for each node $t$ that has been involved in a previous prediction, we can reuse its encryption $g^{\alpha \beta t}$. 
A detailed analysis of this caching mechanism and its security are provided in Section~\ref{sec:security}.
Figure~\ref{fig:privatelinkprediction} provides a graphic illustration of the \cryptograph{} protocol, where the offline phase is represented in gray, and the online one in black.

\begin{figure}[h!]
\begin{alachemfig}[\textwidth]{\cryptograph{} Private Link Prediction}

\begin{tikzpicture}
\matrix (m)[matrix of nodes, column  sep=0.5cm,row  sep=6mm, nodes={draw=none, anchor=center,text depth=0pt} ]{
$P_1$ : $G_1=(\V,\E_1)$                           & & $P_2$ : $G_2=(\V,\E_2)$\\[-5mm]
$\color{gray} \alpha \gets \Z_q$                                 & & $\color{gray}\beta \gets \Z_q$  \\[-7mm]
$\color{gray} \forall x_i \in \Gamma_1(x), a_i = g^{\alpha x_i}$ & & $\color{gray}\forall x_i \in \Gamma_2(x), c_i = g^{\beta x_i}$\\[-7mm]
$\color{gray} \forall y_i \in \Gamma_1(y), b_i = g^{\alpha y_i}$ & & $\color{gray}\forall y_i \in \Gamma_2(y), d_i = g^{\beta y_i}$\\[-7mm]
                                    & $\{a_1,\dots,a_{|\Gamma_1(x)|}\}$ & \\[-5mm]
                                    & $\{b_1,\dots,b_{|\Gamma_1(y)|}\}$ &
                                                    & & \\[-10mm]
                                                    & & $\forall a_i \in \{a_1,\dots,a_{|\Gamma_1(x)|}\}, a_i'=a_i^\beta$  \\[-7mm]
                                                    & & $\forall b_i \in \{b_1,\dots,b_{|\Gamma_1(y)|}\}, b_i'=b_i^\beta$  \\[-7mm]  
                                    & $\{a_1', \dots, a_{|\Gamma_1(x)|}' \}$ &\\[-7mm]
                                    & $\{b_1', \dots, b_{|\Gamma_1(y)|}' \}$ &\\[-5mm]
                                    & $\{c_1, \dots, c_{|\Gamma_2(x)|} \}$ &\\[-7mm]
                                    & $\{d_1, \dots, d_{|\Gamma_2(y)|} \}$ &\\[-5mm]
                                                    & & \\[-5mm]
$\forall c_i \in \{c_1,\dots,c_{|\Gamma_2(x)|}\}, c_i'=c_i^\alpha$ & & \\[-7mm]
$\forall d_i \in \{d_1,\dots,d_{|\Gamma_2(y)|}\}, d_i'=d_i^\alpha$ & & \\[-4mm]
$\llbracket \Gamma(x) \rrbracket = \{a_1',\dots\} \cup \{c_1',\dots\}$     & & \\[-7mm]
$\llbracket \Gamma(y) \rrbracket = \{b_1',\dots\} \cup \{d_1',\dots\}$     & & \\[-4mm]
$\CN(x,y) = |\llbracket \Gamma(x) \rrbracket \cap \llbracket \Gamma(y) \rrbracket|$           & & \\[-7mm]
                                                &$\CN(x,y)$ & \\
                                                  & & \\
};
\draw[shorten <=-1cm,shorten >=-0.5cm] (m-1-1.south east)--(m-1-1.south west);
\draw[shorten <=-0.5cm,shorten >=-1cm] (m-1-3.south east)--(m-1-3.south west);
\draw[shorten <=-0.5cm,shorten >=-0.5cm,-latex] (m-6-2.north west)--(m-6-2.north east);
\draw[shorten <=-0.5cm,shorten >=-0.5cm,-latex] (m-11-2.north east)--(m-11-2.north west);
\draw[shorten <=-0.5cm,shorten >=-0.5cm,-latex] (m-19-2.south west)--(m-19-2.south east);
\end{tikzpicture}
\end{alachemfig}

\caption{\small{A diagram of \cryptograph{}, our private link prediction protocol for two nodes. 
Both parties are assumed to share a common element $g \in \mathbb{G}_q$. 
In the offline part (gray), both parties generates a key ($\alpha$ for $P_1$ and $\beta$ for $P_2$) and then encrypt the neighbors of the nodes of $x$ and $y$ of their respective graphs. 
The online phase (black) begins with $P_1$ transmitting the encrypted nodes under consideration to $P_2$. 
Subsequently, $P_2$ proceeds by re-encrypting them using his own key $\beta$ before shuffling them at random and sending them accompanied of his own encrypted nodes. 
Hence, leveraging the commutativity of the encryption algorithm, $P_1$ can incorporate his key to the nodes of $P_2$. Finally once $P_1$ gets $|\Gamma(x)|$ and $|\Gamma(y)|$ by matching the ciphertexts, he can compute the similarity measure (here Common Neighbor, but it can also be Jaccard and Cosine as well). }}
\label{fig:privatelinkprediction}
\end{figure}
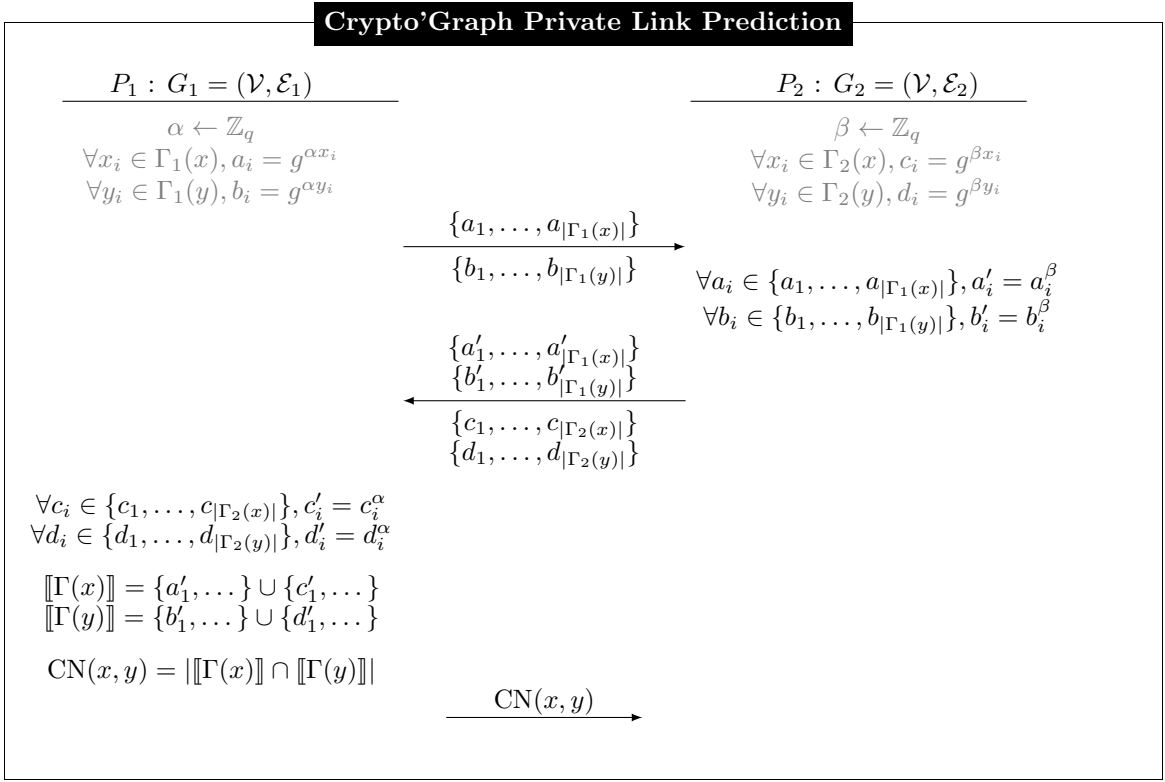

\subsection{Performance results}
\label{protocol:performance}
To assess the performance of our protocol, we evaluate it on a real world graph commonly used in the literature for link prediction. 
Our experiments are conducted on two subgraphs of the Polblogs dataset from~\cite{nr}, which represents political blogging sites in the USA. 
The two subgraphs are obtained by sampling links from the original dataset. 
More precisely, as in \cite{Zhang23}, the membership of a link to one or both of the graphs is determined as follows : we choose two values $q_1, q_2$ in the interval $[0,1]$, such that $q_1 \leq q_2$.
Then, for each link in the original graph, we sample a random value $v$ in the same interval. 
If the sampled value lands in $[0,q_1]$, the link is attributed to $G_1$ while if $v \in ]q_1, q_2]$, the link is added to $G_2$. Otherwise the link is attributed to both the subgraphs. 

This generation of the distributed dataset enables to control the proportions of links owned by one or both of the graphs. 
The experiments described hereafter are made with $q_1$ and $q_2$ set respectively to 0.3 and 0.6, which results in each subgraph solely owning 30\% and sharing 40\% of the initial graph with the other one. 

Our implementation is single-threaded and developed in C++, and for efficient exponentiation, we use the OpenSSL implementation of the \textit{NIST P-256}~\cite{sec2} elliptic curve. 
This choice allows to achieve the typical 128-bit security level requirement in cryptographic protocols.  
Experiments are run on a desktop computer running the 20.04 LTS version of Ubuntu operating system with 64GB of memory, and 16 x 11th generation Intel i9@ 3.50 GHz cores.
Since the protocol operates entirely on one machine, there is no network-related delay. 
In addition, our implementation uses the caching mechanism described previously.
We compute the common neighbors heuristic for each pair of nodes in the graph.  
We also re-implemented the solution of~\cite{Demirag23} as described by its authors and present their performance next to ours in Table \ref{tab:performances}, with results obtained being consistent with those presented in their original paper.
Three scenarios are considered : (1) predictions on all the node pairs in the graph (all vs all), (2) predictions between a single random node and all the other ones (one vs all) and (3) predictions between two random nodes (one vs one). 
These results show a drastic improvement of one to several orders of magnitude in both computing time and communication. 

\begin{table}
    \centering 
    \begin{tabular}{lccccc}
        \hline
        Topology & Protocol & Time $G_1$ & Time $G_2$ & Com. $G_1$ & Com $G_2$ \\
        \hline
         all vs all & \cite{Demirag23} & 2h 21min 56s & 1h 14min 18s & 291.96MB & 316.07MB\\         & \textbf{\cryptograph{}}  & \textbf{5min 58s}  & \textbf{1s} & \textbf{0.70MB}  & \textbf{1.43MB}  \\
         \hline

          one vs all & \cite{Demirag23} & 28s & 28s & 20.44MB & 22.0MB\\
         & \textbf{\cryptograph{}} & \textbf{913ms} & \textbf{790ms} & \textbf{0.71MB} & \textbf{1.43MB} \\
        \hline
          one vs one & \cite{Demirag23} & 426ms & 432s & 0.31MB & 0.33MB\\
         & \textbf{\cryptograph{}} & \textbf{26ms} & \textbf{22ms} & \textbf{0.02MB} & \textbf{0.04MB}  \\
         \hline
    \end{tabular}
    \caption[Online running time and communication performances of \cryptograph, compared to \cite{Demirag23}]{Online running time and communication performances of \cryptograph, compared to \cite{Demirag23}}
    \label{tab:performances}
\end{table}

\subsection{Security Model}
\label{sec:security}

In this subsection, we introduce the security model of our protocol and prove its security against a semi-honest adversary\footnote{The term semi-honest adversary refers to a participant of the protocol that does not deviate maliciously from it but tries to infer new knowledge about the inputs of other parties from the information it gathers.} under well-known cryptographic assumptions that we recall hereafter. 
In the following, we denote the security parameter as $\lambda$ and $n$ as the number of nodes. 

\begin{definition}[Discrete Logarithm Assumption]
Let $\G$ be a cyclic group of generator $g$. 
The Discrete Logarithm Problem (DLP) is hard in $\G$ if, for every efficient algorithm $\A$, the following probability is a negligible function of $\lambda$ :
$$ \P[\A(g,g^a)=a]$$
\end{definition}

\begin{definition}[Decision Diffie-Hellman Assumption]
Let $\G$ be a cyclic group and $g$ be its generator. 
We assume that the bit-length of group size is $l$. 
The Decision Diffie-Hellman (DDH) problem is hard in $\G$ if, for every efficient algorithm $\A$, the following probability is a negligible function of $\lambda$:
$$|\P[x,y\gets\{0,1\}^l :\A(g,g^x,g^y,g^{xy})=1] - \P[x,y,z\gets\{0,1\}^l :A(g,g^x,g^y,g^z)=1]|$$
\end{definition}

\begin{definition}[One-More-Diffie-Hellman Assumption]
Let $\G$ be a cyclic group of order $q$ and $g$ be its generator. 
The One-More-DH problem~\cite{bellare2003one} is said to be ($\tau$, $t$)-hard if for every algorithm $\A$ that runs in time $t$ we have:
$$\P[\{(g_i,(g_i)^x)\}_{i=1,\dots,n+1} \gets A^{DH_x(.)}(g_1,\dots ,g_{m})] \leq \tau$$
in which $m \geq n$ and $\A^{DH_x(.)}$ is the algorithm $\A$ with access to a "$DH_x(.)$" oracle. 
We assume that $A$ can make at most $n$ queries to the $DH_x(.)$ oracle.
\end{definition}

\begin{theorem}
The security of the proposed protocol is ensured by the DLP problem and the One-More-DH problem if both parties reinitialise their keys $\alpha$ and $\beta$ after each instantiation of the protocol.
\end{theorem}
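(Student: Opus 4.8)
The plan is to phrase security in the standard simulation paradigm: for each semi-honest party $P_k$, I would construct a simulator $\mathcal{S}_k$ that, given only $P_k$'s input (its private subgraph $G_k$ and the queried pairs) and the prescribed output (the similarity scores, equivalently $\CN(x,y)$, $|\Gamma(x)|$, $|\Gamma(y)|$ for each queried pair), produces a transcript computationally indistinguishable from $P_k$'s real view. The real view of $P_1$ consists of the sets $\{c_i\},\{d_i\}$ it receives (before it applies $\alpha$) together with its own randomness $\alpha$; the real view of $P_2$ consists of the sets $\{a_i\},\{b_i\},\{a_i'\},\{b_i'\}$ it receives together with $\beta$. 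First I would observe that all transmitted group elements are of the form $g^{\alpha z}$, $g^{\beta z}$ or $g^{\alpha\beta z}$ for node labels $z$; the simulator does not know the labels $z$ belonging to the other party's exclusive edge set, so it must argue these elements are indistinguishable from random group elements (or from encryptions of dummy labels) of the appropriate multiplicity.

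The key steps, in order, would be: (i) reduce the indistinguishability of a single transmitted element $g^{\beta z}$ (for $z$ known only to $P_2$) from a uniformly random element of $\G_q$ to the DDH assumption, using the fact that $\beta$ is fresh uniform in $\Z_q$ and $z$ is a hidden exponent — this is where the "reinitialise keys after each instantiation" hypothesis is essential, since reuse of $\beta$ across protocol runs on correlated inputs would break the clean DDH reduction; (ii) extend this to the whole vector of transmitted elements by a standard hybrid argument over the at most $O(n)$ elements, losing a factor polynomial in $n$ in the distinguishing advantage; (iii) handle the doubly-encrypted elements $g^{\alpha\beta z}$ and the matching structure: here the simulator must reproduce, in the simulated transcript, exactly the collision pattern that determines $\CN(x,y)$ and the union sizes — it can do this because that pattern is fully determined by the output it is given, so it picks random group elements subject to the prescribed equalities; (iv) argue that no \emph{additional} collisions occur in the real execution except with negligible probability (two distinct labels colliding after exponentiation), which follows from injectivity of $z\mapsto g^{\alpha\beta z}$ on $\Z_q$ together with the elements being in a prime-order group; (v) invoke the One-More-DH assumption to rule out that $P_2$, who legitimately receives $\{a_i\}=\{g^{\alpha x_i}\}$ and then obtains $\{a_i'\}=\{g^{\alpha\beta x_i}\}$ after contributing $\beta$, can produce a new pair $(h,h^\alpha)$ for an element $h$ it did not receive — i.e. that $P_2$ cannot "decrypt" or extend $P_1$'s $\alpha$-layer, which is exactly what would let it learn edges of $G_1$ outside the intended output; DLP then guarantees it cannot extract $\alpha$ outright.

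I would then assemble these into the two simulator constructions and conclude indistinguishability of real and ideal executions, which is the definition of semi-honest security. For the caching optimization discussed earlier, I would note that correctness and the above arguments are unaffected within a single instantiation because reusing $g^{\alpha\beta t}$ for a repeated node $t$ is information the output already reveals (the collision pattern across queries is part of the functionality); the theorem's hypothesis precisely forbids the cross-instantiation reuse that would be problematic.

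The main obstacle I anticipate is step (iii)–(iv): making precise which collision/equality pattern among the $g^{\alpha\beta z}$ values is \emph{implied by the functionality's output} versus which constitutes extra leakage. One must be careful that revealing $\CN(x,y)$, $|\Gamma(x)|$ and $|\Gamma(y)|$ across many queried pairs — especially with caching, where the same encrypted node recurs — does not leak, via the pattern of equal ciphertexts, more than the union of those numeric outputs; I would argue the simulator can sample the ciphertexts as independent uniform group elements conditioned only on the equalities forced by shared node identities and the given intersection sizes, and that the real transcript matches this distribution up to the negligible probability of a spurious collision, itself bounded using DDH (for the hidden-label elements) and a birthday-type bound in the prime-order group (for honestly-formed elements). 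Tightening this reduction so that the DDH/One-More-DH loss stays polynomial in $n$ and $\lambda$ is the delicate part; everything else is routine hybridization.
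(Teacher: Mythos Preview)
Your proposal takes a substantially different route from the paper. The paper's own argument is a short, informal case analysis: it fixes a node $x$, observes that $P_2$ sees only elements of the form $g^{\alpha x_i}$ (protected by DLP), and that $P_1$ sees $g^{\beta x_i'}$ and $g^{\alpha\beta x_i}$; it then splits on the size of the intersection and argues that with zero or one common element DLP suffices, while with several common elements $P_1$'s view is exactly an instance of One-More-DH (it has $m$ pairs $(g_i,g_i^\beta)$ and must not be able to produce an $(m{+}1)$-th). There is no simulator, no hybrid, and no explicit ideal functionality; the paper is content to argue that neither party can \emph{recover} hidden neighbours, rather than that the view is indistinguishable from a simulated one.

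Your simulation-based treatment is considerably more rigorous and would yield a stronger, cleaner statement, but two points of friction with the theorem as written are worth flagging. First, step~(i) of your plan reduces to DDH, whereas the theorem explicitly names only DLP and One-More-DH; the paper gets away without DDH precisely because it never claims computational indistinguishability of the transmitted elements from random, only that the underlying labels cannot be extracted. If you keep DDH you are proving a (better) theorem under a slightly different hypothesis; if you want to match the statement verbatim you would need to downgrade your step~(i) to an unrecoverability claim rather than an indistinguishability claim. Second, your careful discussion in steps~(iii)--(iv) about which equality pattern among the $g^{\alpha\beta z}$ is implied by the output, and whether caching across pairs leaks more, goes well beyond what the paper attempts; the paper's proof treats a single pair $(x,y)$ and does not analyse cross-pair leakage at all (that is deferred to the next theorem on graph reconstruction). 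So your approach buys a genuine simulation-security statement at the cost of an extra assumption and more work, while the paper's buys brevity at the cost of formality.
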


\begin{proof}
Let $G=(\V,\E)$ denote a graph shared between two parties $P_1$ and $P_2$. 
Let $x\in \V$ be a node in $G$ and lets denote $x_i$ the elements of $\Gamma_1(x)$ and $x_i'$ the elements of $\Gamma_2(x)$. 
During the protocol, $P_2$ obtains the elements from $P_1$ in the encrypted form $a_i=g^{x_i\alpha}$ that $P_2$ cannot decrypt without $\alpha$ because of the DLP problem. 
Afterwards, $P_2$ encrypts his own elements and sends them in the form $c_i= g^{x_i'\beta}$ before encrypting the elements of $P_1$ by sending them in the form $a_i'=g^{x_i\alpha\beta}$. 
From here, several scenarios are possible:
\begin{enumerate}
\item If there are no elements in the intersection, then the elements of $P_2$ are protected by the hardness of DLP. 
\item If there is only one element in the intersection, for instance $x_j=x'_l$ so $P_1$ can get $g^{x_j\beta} = g^{x'_l\beta}$ (by doing a modular exponentiation of $\alpha^{-1} \in \Z_q$). 
The hardness of DLP implies that $P_1$ cannot find an algorithm $\A$ that run in polynomial time to recover $x_j\beta$. Furthermore, even if we discard the DLP assumption, $x_j\beta$ is indistinguishable from a random element of $\Z_q$.
\item If there are several elements in the intersection, say $x_{j_0}=x'_{l_0},\dots, x_{j_m}=x'_{l_m}$ so $P_1$ can get $g^{x_{j_1}\beta},\dots, g^{x_{j_m}\beta}$ along with $g_1=g^{x_{j_1}},\dots, g_m=g^{x^{j_{m}}}$ that matches the One-More-DH assumption. 
Indeed, this scenario arises due to the following reasoning: if $P_1$ possesses a $DH_x(.)$ oracle facilitating the retrieval of $m$ pairs, namely $(g_1,g_1^\beta), \dots ,(g_m,g_m^\beta)$, it is infeasible for $P_1$ to devise an algorithm $\A$ that run in polynomial time and aims to successfully recover an additional pair $(g_t,g_t^{\beta})$, in which $0 \leq t \leq m$. 
The intuition behind this is to exploit $g_t$ in order to access the corresponding element within the intersection set. Thus, $P_2$'s privacy is ensured by the One-More-DH assumption.
\end{enumerate}
\end{proof}

We have demonstrated that the neighborhood's privacy of the two nodes is preserved during the execution of \cryptograph{} under cryptographic assumptions. 
One might wonder if this privacy guarantee could be compromised when we apply the protocol to all pairs of nodes. 
Indeed, an honest-but-curious adversary could attempt to infer information from the number of neighbors of nodes that have been already considered during the protocol and thereby try to reconstruct the common graph. 

\begin{theorem}
The proposed protocol applied to all nodes is secure against a semi-honest adversary and the worst-case complexity to recover the entire graph is $\mathcal{O}(2^n\sqrt{n})$, in which $n$ is the number of the nodes of the graph.
\end{theorem}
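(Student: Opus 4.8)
The plan is to split the statement into its two components: (i) that running \cryptograph{} on every pair of nodes \emph{with} the caching optimisation leaks to a semi-honest party nothing beyond the list of all pairwise common-neighbor counts $\{\CN(x,y)\}_{x,y\in\V}$ together with all degrees $\{|\Gamma(x)|\}_{x\in\V}$ of the union graph $G=G_1\cup G_2$; and (ii) that reconstructing $G$ from exactly that data is, in the worst case, an exponential-time task, for which I exhibit an algorithm of cost $\mathcal{O}(2^n\sqrt{n})$. Fixing an adversary, say $P_1$, running the protocol honestly but curiously, these two parts correspond to "the transcript reveals no more than the ideal output" and "the ideal output is still hard to invert".

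For (i) I would argue by a simulation/hybrid argument that \emph{adapts} rather than directly invokes the previous theorem: a single execution on $(x,y)$ with fresh keys was shown secure under DLP and One-More-DH, but the caching optimisation deliberately reuses $\alpha,\beta$ across the quadratically many executions, so that result does not apply as is; the new feature to control is that the ciphertext $g^{\alpha\beta t}$ of a node $t$ is now identical wherever $t$ occurs. I would build a simulator that, given only the ideal output above, assigns a fresh random group handle to each node, forms the encrypted neighborhood sets accordingly, shuffles them and replays the whole transcript; indistinguishability from $P_1$'s real view would follow from DDH via a standard hybrid over nodes — the genuine $g^{\alpha\beta t}$ are pseudo-random group elements whose only efficiently detectable relation is equality between occurrences of the same $t$, which the handles reproduce — together with the One-More-DH argument of the previous theorem for the intersection steps. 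Any extra structure $P_1$ can extract (for instance the $G_1$- versus $G_2$-origin of each cached ciphertext, since it produced the $G_1$ halves itself) only makes the subsequent reconstruction cheaper, hence is harmless for an upper bound — but identifying exactly that extra structure is the delicate part.

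For (ii) the key observation is that the ideal output is precisely the matrix $B:=A^2$, where $A$ is the adjacency matrix of $G$: indeed $\CN(x,y)=(A^2)_{xy}$ for $x\ne y$ and $|\Gamma(x)|=(A^2)_{xx}$ because $A$ is $0/1$. So recovering $G$ is the graph square-root problem: given $B=A^2$, find a symmetric zero-diagonal $0/1$ matrix $A$ with $A^2=B$. I would bound it spectrally: diagonalise $B=Q\,\mathrm{diag}(\mu_1,\dots,\mu_n)\,Q^{T}$ with $\mu_i\ge 0$; when the $\mu_i$ are simple, every symmetric square root is $A_\sigma=Q\,\mathrm{diag}(\sigma_1\sqrt{\mu_1},\dots,\sigma_n\sqrt{\mu_n})\,Q^{T}$ for some sign vector $\sigma\in\{\pm1\}^n$, so one enumerates the at most $2^n$ candidates, forms each, and tests whether it is a legal adjacency matrix, with a careful accounting of the per-candidate work yielding the stated $\mathcal{O}(2^n\sqrt{n})$ budget; a zero eigenvalue removes the corresponding sign choice, while a repeated nonzero eigenvalue is the only situation that can push the number of square roots past $2^n$ and must be handled as a separate case. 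Together with (i), this gives that the all-pairs protocol is semi-honest secure and that full graph reconstruction is infeasible for moderately large $n$.

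The step I expect to be the main obstacle is the caching analysis of (i): one must pin down precisely which correlations the reused keys expose across the quadratically many, individually shuffled transcripts — in particular whether cross-referencing the overlapping encrypted neighborhoods lets a party de-anonymise individual ciphertexts — and then show that whatever is exposed is either simulatable from, or strictly dominated in power by, the ideal output, so that the $\mathcal{O}(2^n\sqrt{n})$ estimate of (ii) still governs the adversary's effort. The spectral bound in (ii) is comparatively routine, the only real care being the eigenvalue-multiplicity edge case and the honest count of the per-candidate cost that produces the $\sqrt{n}$ factor.
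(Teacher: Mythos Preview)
Your two-part decomposition is more principled than what the paper actually does: the paper's proof essentially skips your part (i) --- it simply takes as given that after running the protocol on all pairs the adversary holds the values $|\Gamma(x)\cap\Gamma(x_i)|$ and proceeds directly to a brute-force count --- so your simulation argument for the cached, all-pairs transcript is additional rigour, not a deviation.

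The substantive divergence is in part (ii). The paper's argument is purely combinatorial: for a fixed node $x$, the number of neighbourhoods consistent with the revealed intersection sizes is, in the worst case, governed by the central binomial coefficient $\binom{n-2}{(n-2)/2}$, which Stirling's approximation estimates as $\Theta(2^{n}/\sqrt{n})$; multiplying by the $n$ nodes to recover gives the stated $\mathcal{O}(2^{n}\sqrt{n})$. The $\sqrt{n}$ is therefore an artefact of Stirling applied to a subset count, not of any per-candidate processing cost.

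Your spectral route via $B=A^{2}$ is a nice reformulation --- the identification of the ideal output with $A^{2}$ is exactly right --- but it does not reproduce the paper's bound, and the two places you flag as ``care'' are in fact genuine obstructions. First, forming $A_\sigma=Q\,\mathrm{diag}(\sigma_i\sqrt{\mu_i})\,Q^{T}$ and testing it for $0/1$ entries and zero diagonal requires inspecting $\Theta(n^{2})$ entries per sign vector, so the enumeration costs $\mathcal{O}(2^{n}\,\mathrm{poly}(n))$ with the polynomial at least quadratic; there is no honest accounting that pushes the per-candidate work down to $\sqrt{n}$. Second, repeated nonzero eigenvalues are not an edge case for adjacency matrices --- any graph with a nontrivial automorphism has them --- and then the symmetric square roots of $B$ form a continuum parametrised by orthogonal transformations within each eigenspace, so the $2^{n}$ enumeration collapses and you are back to a discrete search over the $0/1$ constraint by other means. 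If your goal is to match the paper's $\mathcal{O}(2^{n}\sqrt{n})$ exactly, the combinatorial Stirling argument is the intended route; your spectral approach would yield a different (weaker in the exponent's polynomial factor, and conditional on simple spectrum) bound.
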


\begin{proof}
Let $G=(\V,\E)$ denote a graph shared between two parties $P_1$ and $P_2$.
We have seen that \cryptograph{} performs the PSI-CA described in Figure~\ref{fig:privatelinkprediction} on all pairs of nodes in the graphs of $P_1$ denoted as $G_1$ and $P_2$ denoted as $G_2$. 
Focusing on $P_1$ performing a PSI-CA between $x$ and all $x_i \in \mathcal{V} \setminus \{x\}$, it does $n-1$ PSI-CA and sequentially receives $n-1$ outcomes ranging from $0$ to $n-2$ (depending on the extent of shared nodes between $x$ and $x_i$ in $G_2$). 
In order to assess the extent of information that $P_1$ can deduce while executing the protocol, one possibility is to conduct a brute-force attack. 
This attack enables to derive an upper bound on the cost incurred by $P_1$ in reconstructing the merged graph $G$ or at  least in determining the neighboring nodes of $x \in \V$. 
Thus, $P_1$ has $|\Gamma(x) \cap \Gamma(x_i)|$ for each $ x_i \in \V / \{x\}$. 
Therefore in the worst case, $P_1$ has $$\binom{n-2}{\frac{n-2}{2}}^{n-1}$$ possible ways of reconstructing the neighborhood of $x$, which can be bounded by $\frac{2^{n-2}}{\sqrt{n-2}}$ according to Stirling's approximation. 
As a consequence the worst-case complexity of the brute-force attack that aims at identifying the neighborhood of a specific node is $\mathcal{O}(\frac{2^{n}}{\sqrt{n}})$. 
By extending this analysis to encompass all nodes, the resulting complexity is $\mathcal{O}(2^{n}\sqrt{n})$. 
\end{proof}

\section{Application to graph sanitization}
\label{sec:application-gnn}

Leveraging on our protocol, we can design a privacy-preserving defense mechanism against attempts to poison data in a GNN application. 
Hereafter, we provide significant experimental evidences of the effectiveness of this defense against state-of-the art graph attacks.

\subsection{Description}
Our approach acts as a preprocessing step by helping to privately clean the distributed graph before downstream learning tasks. 
The core idea of our approach is to identify suspicious or unlikely edges in the distributed graph based on the same approach as link prediction, which we refer to \textit{link removal}. 
These suspicious connections could be indicative of malicious intent as it has been shown in~\cite{Wu19}. 
To obtain a likelihood score for each of the links on the joint network, we run \cryptograph{} for all the possible pairs of nodes in the network by computing the similarity measures introduced in Section~\ref{sec:linkprediction}. 
A threshold $t_i$ is then set by each party $i$, such that all links between pairs of nodes that have a similarity below $t_i$ are considered malicious and discarded from $G_i$.

\subsection{Experimental evaluation}

To assess the benefit of our collaborative approach over individual defense strategies, we evaluate its effectiveness against various types of attacks. 
We consider targeted attacks that aim at changing the class of specific nodes in the graph as well as global attacks that try to decrease the global classification accuracy over all the nodes. 
More precisely, we measure the performance of our defense against the IG-FGSM \cite{Wu19}, Nettack \cite{Zugner18} and Dice \cite{zugner19} attacks. 
The FGSM attack, a targeted attack traditionally applied to continuous image data, has been adapted to the discrete graph context by Wu and collaborators with the use of integrated gradient, hence the name IG-FGSM. 
Zugner and colleagues have proposed Nettack, another targeted attack using gradients to identify high-impact links and maliciously inject them in the neighborhood of an attacked node. 
As of Dice, it is introduced in~\cite{zugner19} as a baseline global attack, which simply randomly creates links between nodes belonging to different classes while removing links between nodes of the same class. 

\begin{figure}[hbt!]
    \centering
    \includegraphics[width=\linewidth]{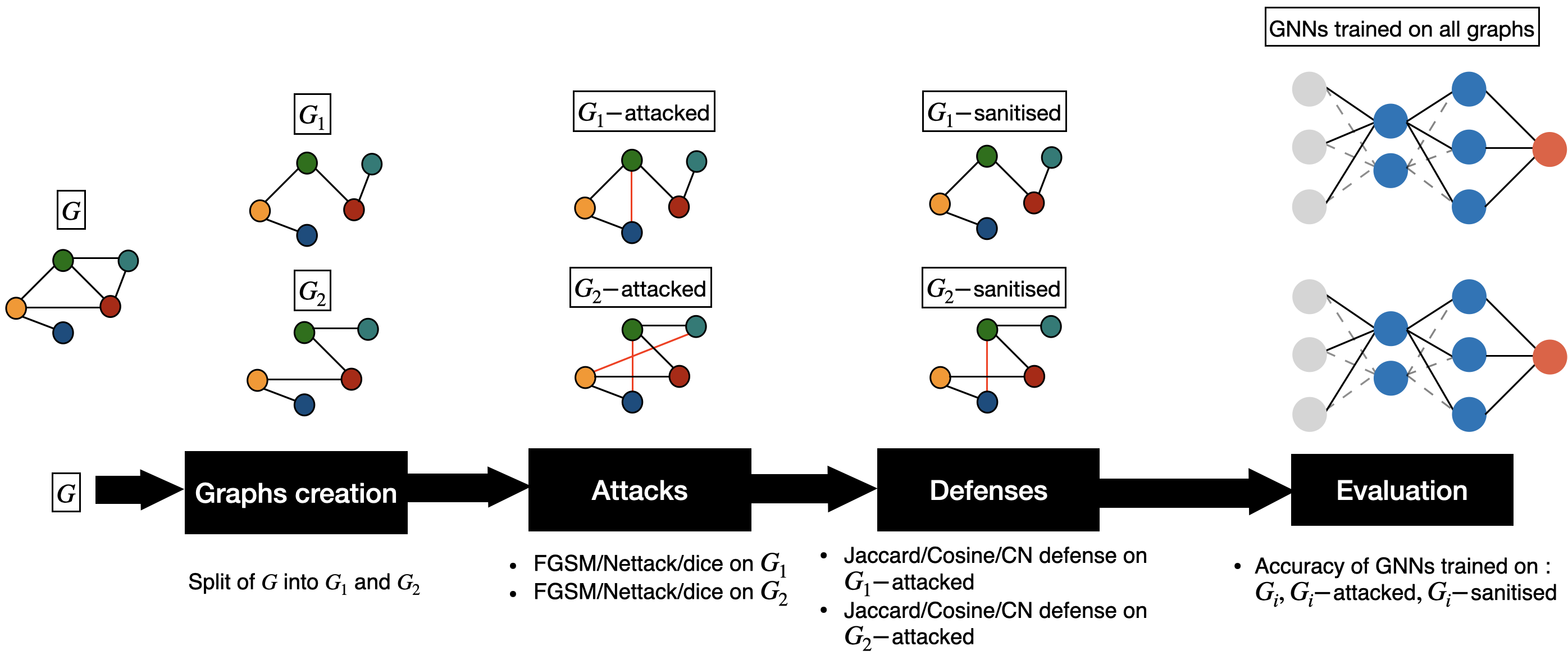}
    \caption{Experimental pipeline for the application of \cryptograph{ } to graph sanitization. 
    At the end of the pipeline, we train simultaneously two GNNs on the different versions of $G_1$ and $G_2$ given through the pipeline.}
    \label{fig:pipeline}
\end{figure}

Figure~\ref{fig:pipeline} represents the experimental pipeline used in this section. 
Our experiments start by creating two subgraphs from the Polblogs dataset, as described in Section \ref{protocol:performance}.
Afterwards, we inject malicious links into $G_1$ and $G_2$ using the previous attacks. 
Since Nettack and IG-FGSM are targeted attacks, we select 20 nodes that will undergo these attacks, while this is not needed for Dice. 
Finally, we apply the three different defense strategies on the poisoned graphs before evaluating the performance of GNNs trained on the sanitized graphs produced by these defenses. 

We have identified the following key parameters that might influence the outcome of the defense mechanisms and subsequently the accuracy of the GNNs trained on sanitized data:
\begin{itemize}
    \item \textbf{The thresholds of similarity $t_1, t_2$ for link removal}.
    This parameter directly impacts the number of links removed during the sanitization. 
    Indeed a high threshold might induce a high false positive rate whereas a low threshold could leave malicious links in the graphs (\emph{i.e.}, leading to false negatives). 
    Note that for party $i$, $t_i \in [0,1]$ for the Jaccard and cosine similarities, and $t_i \in \mathbb{N}$ for the common neighbors similarity.
    \item \textbf{The common proportion of links $ppt$ in the two graphs}. 
    Since the graphs can have overlapping knowledge about the global network, our assumption is that the less they share, the more a collaborative defense is effective.
    \item \textbf{The perturbation rates $r_1$, $r_2$ of the attacks}. 
    This factor influences how many malicious links are introduced by the attacks. 
    More precisely, a perturbation rate of $r_i$ on a certain node $x$ means that the attack is allowed to add at most $r_i \times d(x)$ links to node $x$, in which $d(x)$ is the degree of node $x$. 
    In contrast, a $r_i$-Dice attack on $G_i$ means that the attack injects exactly $r_i \times|\E_i|$ malicious links in the entire graph. 
    For this parameter, our hypothesis is that the more the graphs are attacked, the harder it becomes to recover from such perturbations. 
\end{itemize} 

To demonstrate the protection potential of \cryptograph{}, we have studied the performances of the node classification task on the Polblogs dataset in relation to different ranges of previous parameters. 
More precisely, we begin by exploring the impact of the similarity threshold, before evaluating the variation of the shared proportion of data and finishing with the experiments on the perturbation rate.

\subsubsection{Impact of similarity threshold for defense}
In this section, we study the accuracy of GNNs trained on graphs after local and distributed defenses with different values of $t$. 

\begin{figure}[hbt!]
    \centering
    \includegraphics[width=\linewidth]{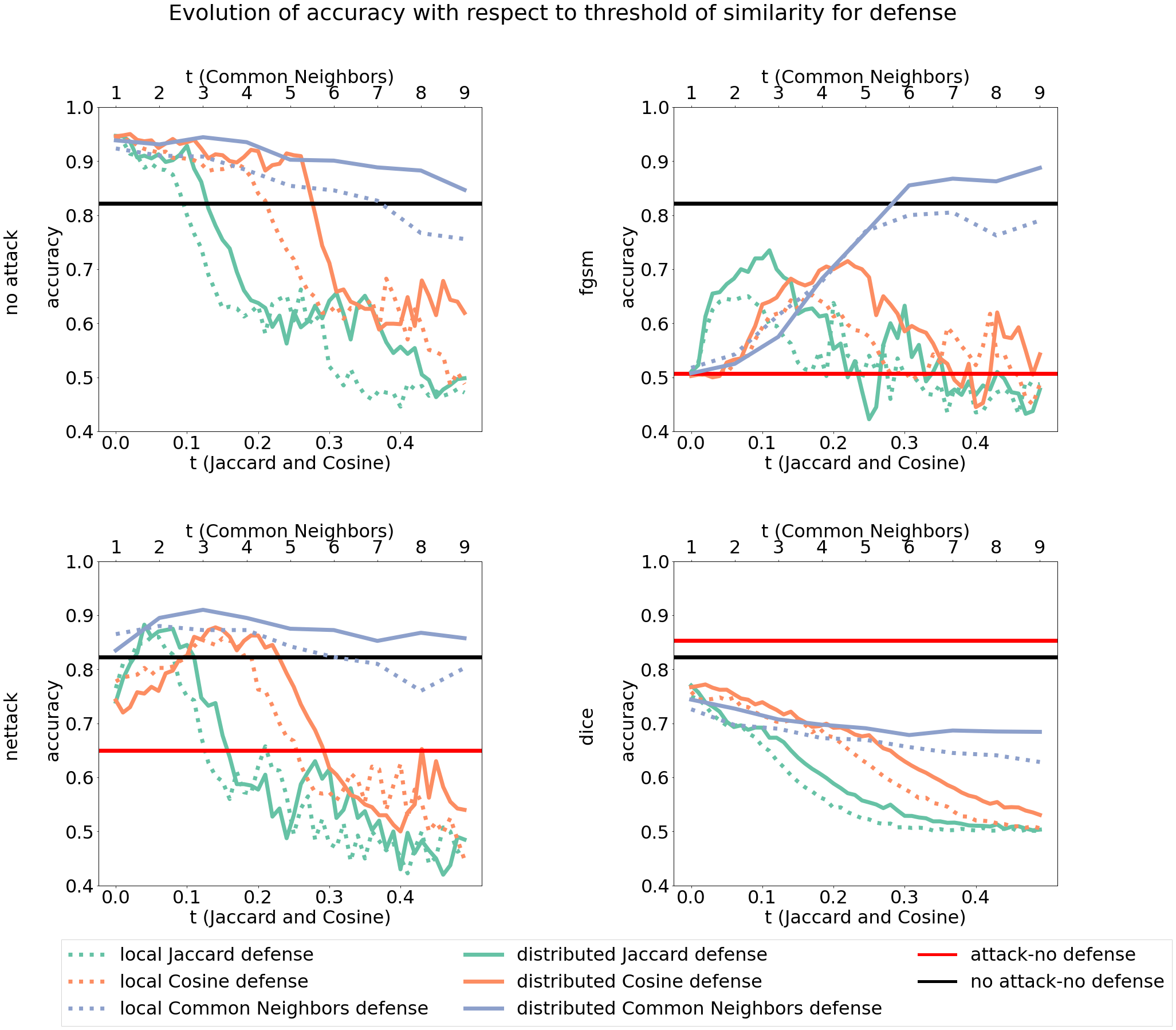}
    \caption{Impact of the similarity threshold $t$ for link removal on the accuracy of GNNs trained on sanitized graphs. The Jaccard, Cosine and Common Neighbors based defenses are presented in the context of no attack, the FGSM, Nettack as well as Dice attacks.} 
    \label{fig:impact_of_threshold}
\end{figure}

More precisely, in this series of experiments, the common proportion $ppt$ of links between $G_1$ and $G_2$ is set to 0.5, $r_1 = 0$ and $r_2 = 0.5$ (\emph{i.e.}, they both share 50\% of the data and each own 25\% of fresh and original data).
Figure~\ref{fig:impact_of_threshold} represents the evolution of the average accuracy of GNNs trained on $G_1$ and $G_2$ in three settings: when no attack and no defense have been performed, on the attacked graphs without defense and on the attacked and sanitized graphs with different defense mechanisms.

We summarize our findings on the impact of the similarity thresholds as follows:
\begin{itemize}
    \item As expected, different thresholds lead to different graph qualities, which in turn induces varying performances for the trained GNNs. The same remark can be made for the similarity metric used for defense.
    \item The distributed defense mechanisms tend to be better than their local counterparts for most of the thresholds, especially the optimal one.
    \item In some situations, the defense mechanisms even allow for a better performance than on the clean graphs. 
    We believe that this could be due to the fact that the defense removes the outliers from the data and that this helps for the GNNs tasks. This is an interesting finding that can motivate the usage of such defenses even when it is not clear if the graphs have been attacked.
    \item The Dice attack surprisingly improves the quality of the graphs. 
    To understand this, one should remember that this attack is really simplistic, and thus might actually add absent but likely connections in the graphs, making them more useful for learning.
\end{itemize}

\subsubsection{Impact of shared proportion of links}
Since the global network can be distributed in many ways, we study the impact of the distribution of links over $G_1$ and $G_2$.

\begin{figure}[hbt!]
    \centering
    \includegraphics[width=\linewidth]{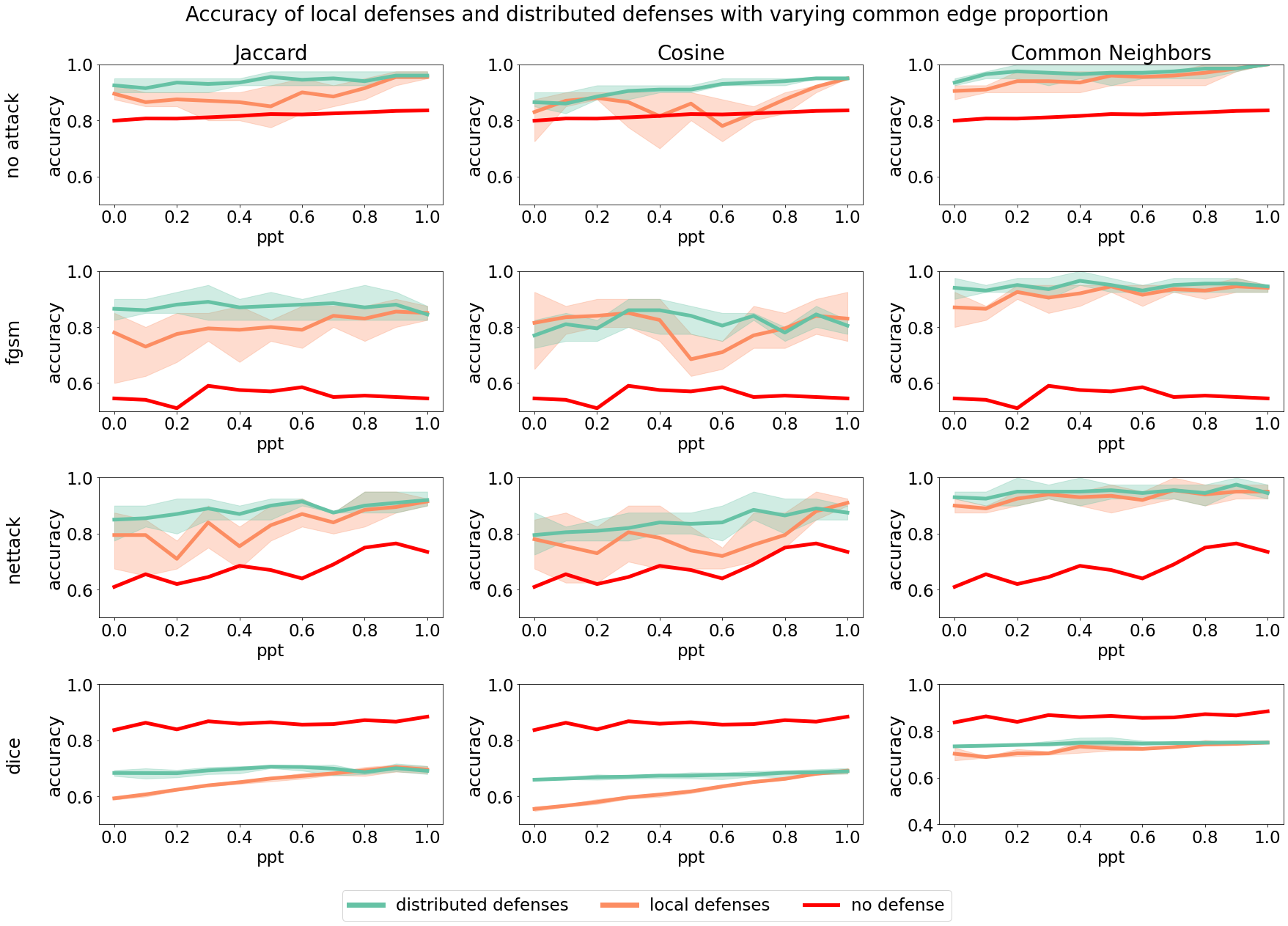}
    \caption{Impact of the shared proportion (ppt) of links between $G_1$ and $G_2$ on the accuracy of the defense based on the different similarity metrics (Jaccard, Cosine and Common Neighbors), depending on the type of attack used (no, FGSM, Nettack and Dice).}
    \label{fig:impact_of_threshold_2d}
\end{figure}

Namely, we vary the proportion $ppt$ of common links owned by $G_1$ and $G_2$ such that $ppt$ ranges from 0 ($\E_1 \cap \E_2 = \emptyset $) to 1 ($\E_1 = \E_2)$. $t_1$ and $t_2$ are set to be the thresholds providing the best accuracy for each metric, $r_1 = 0$ and $r_2 = 0.5$. 
As before, three settings are considered for the evaluation : no attack and no defense have been conducted, attacks have been performed without subsequent defense and finally attack and defense have been deployed.

\begin{figure}[hbt!]
    \centering
    \includegraphics[scale=0.30]{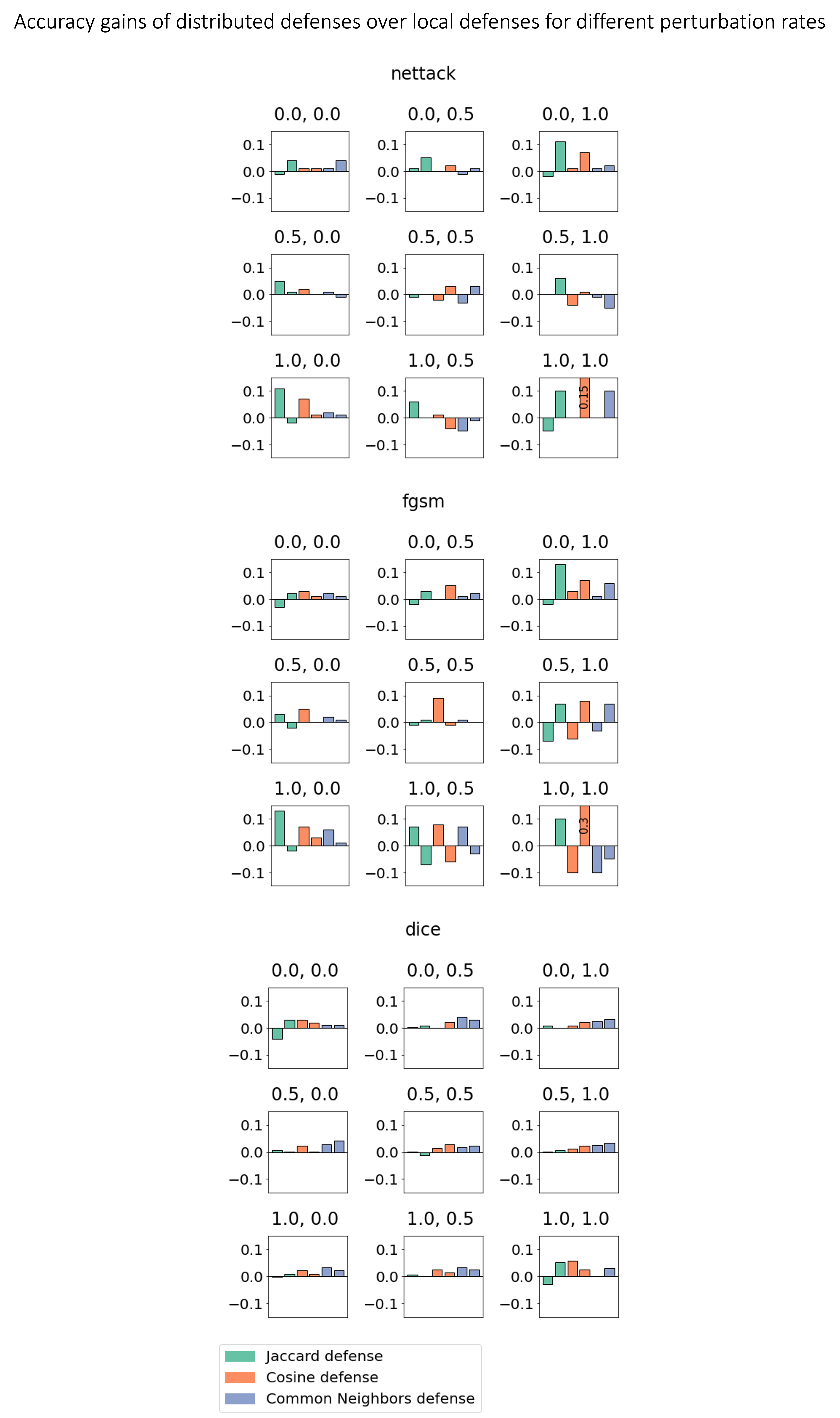}
    \caption{Impact of the perturbation rate of the attacks on the accuracy of a GNN trained on sanitized graph. 
    The amount of perturbation is denoted as $(r_1,r_2)$ on top of histograms. 
    The first and second bar of each metric (\emph{i.e}, each color) represents the accuracy gain for $G_1$ and $G_2$.}
    \label{fig:impact_of_perturbation_rate}
\end{figure}

From the results of this series of experiments, we derive the following observations:
\begin{itemize}
    \item The distributed defense metrics stay consistently better than the local ones over the whole range of proportions.
    \item The quality of the local defenses grows with the proportion and gets near (sometimes exceeds) the performance of distributed defenses, meaning that the more each graph already shares links with the other, the more they can get from a local defense. This directly matches our initial assumption.
    \item Overall, the distributed defenses are more constant than the local ones across multiple instances, which represents another advantage for them. 
    Indeed in practice, one is more likely to choose a defense with a high and stable quality overtime.
     \item Again, the defense mechanisms (especially the distributed ones) make the graphs even better than if they were not attacked in the case of the IG-FGSM and Nettack attacks
     \item As before, the Dice attack slightly improves the accuracy obtained on an attacked graph.
\end{itemize}

To demonstrate the performance of \cryptograph{{}} against different attack rates, we also evaluate the accuracy of GNNs trained on graphs sanitized after various forces of attack. 

\subsubsection{Impact of perturbation rates}

To realize this, we study the combinations of three perturbation rates $\{0, 0.5, 1\}$ for  $r_1$ and $r_2$ with $ppt = 0.5$. 
Here, we show the accuracy gain of the distributed metrics over their local equivalents, with a positive margin meaning that the distributed metric is better that the local one, whereas a negative one favors the local metric.
The results presented in Figure~\ref{fig:impact_of_perturbation_rate} lead us to the following conclusions:
\begin{itemize}
    \item Overall, each graph can find a positive margin for each of the scenarios. 
    This is especially important as the two graphs are not necessarily sanitized using the same defense metric, which leaves room for each graph owner to choose the metric that best suits them.
    \item Often, the most attacked graph is the one with the lowest accuracy gain, which validates our assumption that high perturbation rates are more difficult to overcome.%, even with distributed defenses.
    \item A perturbation rate of 1 is extreme, but in many of the reasonable scenarios, it appears that it is not too costly for the least attacked graph to cooperate and that it is always beneficial for the most attacked one to do so.
\end{itemize}

\section{Conclusion}% and Future Directions}
\label{sec:conclusion}

In this article, we have proposed \cryptograph{}, a protocol for privacy-preserving distributed link prediction. \cryptograph{} is more efficient than the other state-of-the-art methods both in terms of computation and communication, by one to several orders of magnitude, while reaching exactly the same utility. 
Additionally, our protocol is able to compute different similarity metrics, allowing for data owners to choose the best one according to their specific needs.
We also demonstrate that \cryptograph{} is secure against external eavesdroppers and against honest-but-curious participants. 
Based on \cryptograph{}, we build a distributed defense mechanism against data poisoning in graph neural networks application scenarios. 
Our experiments show that this mechanism is effective to mitigate those attacks and can even be beneficial in the absence of attack.  
We also show that the more disjoint the data of the participants is, the more beneficial it is for them to cooperate via our distributed defense mechanism. 
Those benefits vary according to the power of the data poisoning attack. 
In reasonable attack scenarios, cooperation is a good strategy while it is a little more complex in extreme ones.

As a defense against the graph reconstruction attack presented above (which we have shown to be quite difficult to carry in the worst case in our security analysis), we consider as future works the application of methods like the occasional injection of dummy links in the graphs or an adaptation of differential privacy in our context. 
We would also like to propose a method for a private and efficient choice of the defense threshold, which we assume known by each of the party in our current solution. 
Finally in another direction, we would like to better understand the security of our method by exploring more in-depth attack strategies and also combining several similarity metrics to better counter these attacks.

\bibliography{references}

\begin{thebibliography}{10}

\bibitem{sec2}
{\em SEC 2: Recommended Elliptic Curve Domain Parameters}, 2010.

\bibitem{Adamic03}
Lada~A Adamic and Eytan Adar.
\newblock Friends and neighbors on the web.
\newblock {\em Social Networks}, 25(3):211--230, 2003.
\newblock URL: \url{https://www.sciencedirect.com/science/article/pii/S0378873303000091}, \href {https://doi.org/10.1016/S0378-8733(03)00009-1} {\path{doi:10.1016/S0378-8733(03)00009-1}}.

\bibitem{antweiler2022uncovering}
Dario Antweiler, David Sessler, Maxim Rossknecht, Benjamin Abb, Sebastian Ginzel, and J{\"o}rn Kohlhammer.
\newblock Uncovering chains of infections through spatio-temporal and visual analysis of covid-19 contact traces.
\newblock {\em Computers \& Graphics}, 106:1--8, 2022.

\bibitem{Aziz17}
Md. Momin~Al Aziz, Dima Alhadidi, and Noman Mohammed.
\newblock Secure approximation of edit distance on genomic data.
\newblock {\em BMC Medical Genomics}, 10, 07 2017.
\newblock \href {https://doi.org/10.1186/s12920-017-0279-9} {\path{doi:10.1186/s12920-017-0279-9}}.

\bibitem{bellare2003one}
Bellare, Namprempre, Pointcheval, and Semanko.
\newblock The one-more-rsa-inversion problems and the security of chaum's blind signature scheme.
\newblock {\em Journal of Cryptology}, 16:185--215, 2003.

\bibitem{decristofaro12}
Emiliano De~Cristofaro, Paolo Gasti, and Gene Tsudik.
\newblock Fast and private computation of cardinality of set intersection and union.
\newblock In {\em International Conference on Cryptology and Network Security}, pages 218--231. Springer, 2012.

\bibitem{Debnath17}
Sumit~Kumar Debnath and Ratna Dutta.
\newblock Provably secure fair mutual private set intersection cardinality utilizing bloom filter.
\newblock In Kefei Chen, Dongdai Lin, and Moti Yung, editors, {\em Information Security and Cryptology}, pages 505--525, Cham, 2017. Springer International Publishing.

\bibitem{Debnath20}
Sumit~Kumar Debnath, Pantelimon Stănică, Tanmay Choudhury, and Nibedita Kundu.
\newblock Post-quantum protocol for computing set intersection cardinality with linear complexity.
\newblock {\em IET Information Security}, 14(6):661--669, 2020.
\newblock URL: \url{https://ietresearch.onlinelibrary.wiley.com/doi/abs/10.1049/iet-ifs.2019.0315}, \href {https://arxiv.org/abs/https://ietresearch.onlinelibrary.wiley.com/doi/pdf/10.1049/iet-ifs.2019.0315} {\path{arXiv:https://ietresearch.onlinelibrary.wiley.com/doi/pdf/10.1049/iet-ifs.2019.0315}}, \href {https://doi.org/10.1049/iet-ifs.2019.0315} {\path{doi:10.1049/iet-ifs.2019.0315}}.

\bibitem{Demirag23}
Didem Demirag, Mina Namazi, Erman Ayday, and Jeremy Clark.
\newblock Privacy-preserving link prediction.
\newblock In Joaquin Garcia-Alfaro, Guillermo Navarro-Arribas, and Nicola Dragoni, editors, {\em Data Privacy Management, Cryptocurrencies and Blockchain Technology}, pages 35--50, Cham, 2023. Springer International Publishing.

\bibitem{Diffie76}
Whitfield Diffie and Martin Hellman.
\newblock New directions in cryptography.
\newblock {\em IEEE Transactions on Information Theory}, 22(6):644--654, 1976.
\newblock \href {https://doi.org/10.1109/TIT.1976.1055638} {\path{doi:10.1109/TIT.1976.1055638}}.

\bibitem{Dong17}
Changyu Dong and Grigorios Loukides.
\newblock Approximating private set union/intersection cardinality with logarithmic complexity.
\newblock {\em IEEE Transactions on Information Forensics and Security}, 12(11):2792--2806, 2017.
\newblock \href {https://doi.org/10.1109/TIFS.2017.2721360} {\path{doi:10.1109/TIFS.2017.2721360}}.

\bibitem{Freedman04}
Michael~J. Freedman, Kobbi Nissim, and Benny Pinkas.
\newblock Efficient private matching and set intersection.
\newblock In Christian Cachin and Jan~L. Camenisch, editors, {\em Advances in Cryptology - EUROCRYPT 2004}, pages 1--19, Berlin, Heidelberg, 2004. Springer Berlin Heidelberg.

\bibitem{Grover16}
Aditya Grover and Jure Leskovec.
\newblock Node2vec: Scalable feature learning for networks.
\newblock In {\em Proceedings of the 22nd ACM SIGKDD International Conference on Knowledge Discovery and Data Mining}, KDD '16, page 855–864, New York, NY, USA, 2016. Association for Computing Machinery.
\newblock \href {https://doi.org/10.1145/2939672.2939754} {\path{doi:10.1145/2939672.2939754}}.

\bibitem{Ion21}
Mihaela Ion, Ben Kreuter, Ahmet~Erhan Nergiz, Sarvar Patel, Shobhit Saxena, Karn Seth, Mariana Raykova, David Shanahan, and Moti Yung.
\newblock On deploying secure computing: Private intersection-sum-with-cardinality.
\newblock In {\em 2020 IEEE European Symposium on Security and Privacy (EuroS\&P)}, pages 370--389, 2020.
\newblock \href {https://doi.org/10.1109/EuroSP48549.2020.00031} {\path{doi:10.1109/EuroSP48549.2020.00031}}.

\bibitem{Jaccard01}
Paul Jaccard.
\newblock Etude de la distribution florale dans une portion des alpes et du jura.
\newblock {\em Bulletin de la Societe Vaudoise des Sciences Naturelles}, 1901.

\bibitem{Kashima06}
Hisashi Kashima and Naoki Abe.
\newblock A parameterized probabilistic model of network evolution for supervised link prediction.
\newblock In {\em Sixth International Conference on Data Mining (ICDM'06)}, pages 340--349, 2006.
\newblock \href {https://doi.org/10.1109/ICDM.2006.8} {\path{doi:10.1109/ICDM.2006.8}}.

\bibitem{Kipf17}
Thomas~N. Kipf and Max Welling.
\newblock Semi-supervised classification with graph convolutional networks.
\newblock In {\em 5th International Conference on Learning Representations, {ICLR} 2017, Toulon, France, April 24-26, 2017, Conference Track Proceedings}. OpenReview.net, 2017.
\newblock URL: \url{https://openreview.net/forum?id=SJU4ayYgl}.

\bibitem{Liben03}
David Liben-Nowell and Jon Kleinberg.
\newblock The link prediction problem for social networks.
\newblock In {\em Proceedings of the Twelfth International Conference on Information and Knowledge Management}, CIKM '03, page 556–559, New York, NY, USA, 2003. Association for Computing Machinery.
\newblock \href {https://doi.org/10.1145/956863.956972} {\path{doi:10.1145/956863.956972}}.

\bibitem{Liu18}
Dongxiao Liu, Jianbing Ni, Hongwei Li, Xiaodong Lin, and Xuemin Shen.
\newblock Efficient and privacy-preserving ad conversion for v2x-assisted proximity marketing.
\newblock In {\em 2018 IEEE 15th International Conference on Mobile Ad Hoc and Sensor Systems (MASS)}, pages 10--18, 2018.
\newblock \href {https://doi.org/10.1109/MASS.2018.00014} {\path{doi:10.1109/MASS.2018.00014}}.

\bibitem{Siyi20}
Siyi Lv, Jinhui Ye, Sijie Yin, Xiaochun Cheng, Chen Feng, Xiaoyan Liu, Rui Li, Zhaohui Li, Zheli Liu, and Li~Zhou.
\newblock Unbalanced private set intersection cardinality protocol with low communication cost.
\newblock {\em Future Generation Computer Systems}, 102:1054--1061, 2020.
\newblock URL: \url{https://www.sciencedirect.com/science/article/pii/S0167739X19316413}, \href {https://doi.org/10.1016/j.future.2019.09.022} {\path{doi:10.1016/j.future.2019.09.022}}.

\bibitem{Mikolov13}
Tom{\'{a}}s Mikolov, Kai Chen, Greg Corrado, and Jeffrey Dean.
\newblock Efficient estimation of word representations in vector space.
\newblock In Yoshua Bengio and Yann LeCun, editors, {\em 1st International Conference on Learning Representations, {ICLR} 2013, Scottsdale, Arizona, USA, May 2-4, 2013, Workshop Track Proceedings}, 2013.
\newblock URL: \url{http://arxiv.org/abs/1301.3781}.

\bibitem{miyaji2017privacy}
Atsuko Miyaji, Kazuhisa Nakasho, and Shohei Nishida.
\newblock Privacy-preserving integration of medical data: a practical multiparty private set intersection.
\newblock {\em Journal of medical systems}, 41:1--10, 2017.

\bibitem{MORALES2023}
Daniel Morales, Isaac Agudo, and Javier Lopez.
\newblock Private set intersection: A systematic literature review.
\newblock {\em Computer Science Review}, 49:100567, 2023.
\newblock URL: \url{https://www.sciencedirect.com/science/article/pii/S1574013723000345}, \href {https://doi.org/10.1016/j.cosrev.2023.100567} {\path{doi:10.1016/j.cosrev.2023.100567}}.

\bibitem{Nagaraja10}
Shishir Nagaraja, Prateek Mittal, Chi-Yao Hong, Matthew Caesar, and Nikita Borisov.
\newblock {BotGrep}: Finding {P2P} bots with structured graph analysis.
\newblock In {\em 19th USENIX Security Symposium (USENIX Security 10)}, Washington, DC, August 2010. USENIX Association.
\newblock URL: \url{https://www.usenix.org/conference/usenixsecurity10/botgrep-finding-p2p-bots-structured-graph-analysis}.

\bibitem{Newman01}
M.~E.~J. Newman.
\newblock Clustering and preferential attachment in growing networks.
\newblock {\em Phys. Rev. E}, 64:025102, Jul 2001.
\newblock \href {https://doi.org/10.1103/PhysRevE.64.025102} {\path{doi:10.1103/PhysRevE.64.025102}}.

\bibitem{Nomura20}
Kenta Nomura, Yoshiaki Shiraishi, Masami Mohri, and Masakatu Morii.
\newblock Secure association rule mining on vertically partitioned data using private-set intersection.
\newblock {\em IEEE Access}, 8:144458--144467, 2020.
\newblock \href {https://doi.org/10.1109/ACCESS.2020.3014330} {\path{doi:10.1109/ACCESS.2020.3014330}}.

\bibitem{pavlopoulos2011using}
Georgios~A Pavlopoulos, Maria Secrier, Charalampos~N Moschopoulos, Theodoros~G Soldatos, Sophia Kossida, Jan Aerts, Reinhard Schneider, and Pantelis~G Bagos.
\newblock Using graph theory to analyze biological networks.
\newblock {\em BioData mining}, 4:1--27, 2011.

\bibitem{Perozzi14}
Bryan Perozzi, Rami Al-Rfou, and Steven Skiena.
\newblock Deepwalk: Online learning of social representations.
\newblock In {\em Proceedings of the 20th ACM SIGKDD International Conference on Knowledge Discovery and Data Mining}, KDD '14, page 701–710, New York, NY, USA, 2014. Association for Computing Machinery.
\newblock \href {https://doi.org/10.1145/2623330.2623732} {\path{doi:10.1145/2623330.2623732}}.

\bibitem{ResendeA18}
Amanda Cristina~Davi Resende and Diego~F. Aranha.
\newblock Faster unbalanced private set intersection.
\newblock In {\em Financial Cryptography}, volume 10957, pages 203--221. Springer, 2018.

\bibitem{nr}
Ryan~A. Rossi and Nesreen~K. Ahmed.
\newblock The network data repository with interactive graph analytics and visualization.
\newblock In {\em AAAI}, 2015.
\newblock URL: \url{https://networkrepository.com}.

\bibitem{Ruan19}
Ou~Ruan, Zihao Wang, Jing Mi, and Mingwu Zhang.
\newblock New approach to set representation and practical private set-intersection protocols.
\newblock {\em IEEE Access}, 7:64897--64906, 2019.
\newblock \href {https://doi.org/10.1109/ACCESS.2019.2917057} {\path{doi:10.1109/ACCESS.2019.2917057}}.

\bibitem{Scarselli09}
Franco Scarselli, Marco Gori, Ah~Chung Tsoi, Markus Hagenbuchner, and Gabriele Monfardini.
\newblock The graph neural network model.
\newblock {\em IEEE Transactions on Neural Networks}, 20(1):61--80, 2009.
\newblock \href {https://doi.org/10.1109/TNN.2008.2005605} {\path{doi:10.1109/TNN.2008.2005605}}.

\bibitem{Shen18}
Liyan Shen, Xiaojun Chen, Dakui Wang, Binxing Fang, and Ye~Dong.
\newblock Efficient and private set intersection of human genomes.
\newblock In {\em 2018 IEEE International Conference on Bioinformatics and Biomedicine (BIBM)}, pages 761--764, 2018.
\newblock \href {https://doi.org/10.1109/BIBM.2018.8621291} {\path{doi:10.1109/BIBM.2018.8621291}}.

\bibitem{Singhal01}
Amit Singhal.
\newblock Modern information retrieval: A brief overview.
\newblock {\em Bulletin of the IEEE Computer Society Technical Committee on Data Engineering}, 2001.

\bibitem{Wilson09}
Christo Wilson, Bryce Boe, Alessandra Sala, Krishna~P.N. Puttaswamy, and Ben~Y. Zhao.
\newblock User interactions in social networks and their implications.
\newblock In {\em Proceedings of the 4th ACM European Conference on Computer Systems}, EuroSys '09, page 205–218, New York, NY, USA, 2009. Association for Computing Machinery.
\newblock \href {https://doi.org/10.1145/1519065.1519089} {\path{doi:10.1145/1519065.1519089}}.

\bibitem{Wu19}
Huijun Wu, Chen Wang, Yuriy Tyshetskiy, Andrew Docherty, Kai Lu, and Liming Zhu.
\newblock Adversarial examples for graph data: Deep insights into attack and defense.
\newblock In {\em Proceedings of the Twenty-Eighth International Joint Conference on Artificial Intelligence, {IJCAI-19}}, pages 4816--4823. International Joint Conferences on Artificial Intelligence Organization, 7 2019.
\newblock \href {https://doi.org/10.24963/ijcai.2019/669} {\path{doi:10.24963/ijcai.2019/669}}.

\bibitem{Xu23}
Xiaojun Xu, Hanzhang Wang, Alok Lal, Carl~A. Gunter, and Bo~Li.
\newblock Edog: Adversarial edge detection for graph neural networks.
\newblock In {\em 2023 IEEE Conference on Secure and Trustworthy Machine Learning (SaTML)}, pages 291--305, 2023.
\newblock \href {https://doi.org/10.1109/SaTML54575.2023.00027} {\path{doi:10.1109/SaTML54575.2023.00027}}.

\bibitem{Zhang23}
Hai-Feng Zhang, Xiao-Jing Ma, Jing Wang, Xingyi Zhang, Donghui Pan, and Kai Zhong.
\newblock Privacy-preserving link prediction in multiple private networks.
\newblock {\em IEEE Transactions on Computational Social Systems}, 10(2):538--550, 2023.
\newblock \href {https://doi.org/10.1109/TCSS.2022.3168010} {\path{doi:10.1109/TCSS.2022.3168010}}.

\bibitem{Zhang18}
Muhan Zhang and Yixin Chen.
\newblock Link prediction based on graph neural networks.
\newblock In {\em Proceedings of the 32nd International Conference on Neural Information Processing Systems}, NIPS'18, page 5171–5181, Red Hook, NY, USA, 2018. Curran Associates Inc.

\bibitem{Zheng15}
Yao Zheng, Bing Wang, Wenjing Lou, and Y.~Thomas Hou.
\newblock Privacy-preserving link prediction in decentralized online social networks.
\newblock In G{\"u}nther Pernul, Peter Y~A~Ryan, and Edgar Weippl, editors, {\em Computer Security -- ESORICS 2015}, pages 61--80, Cham, 2015. Springer International Publishing.

\bibitem{Zugner18}
Daniel Z\"{u}gner, Amir Akbarnejad, and Stephan G\"{u}nnemann.
\newblock Adversarial attacks on neural networks for graph data.
\newblock In {\em Proceedings of the 24th ACM SIGKDD International Conference on Knowledge Discovery \& Data Mining}, KDD '18, page 2847–2856, New York, NY, USA, 2018. Association for Computing Machinery.
\newblock \href {https://doi.org/10.1145/3219819.3220078} {\path{doi:10.1145/3219819.3220078}}.

\bibitem{zugner19}
Daniel Z{\"{u}}gner and Stephan G{\"{u}}nnemann.
\newblock Adversarial attacks on graph neural networks via meta learning.
\newblock In {\em 7th International Conference on Learning Representations, {ICLR} 2019, New Orleans, LA, USA, May 6-9, 2019}. OpenReview.net, 2019.
\newblock URL: \url{https://openreview.net/forum?id=Bylnx209YX}.

\end{thebibliography}

\end{document}